\documentclass[10pt]{article}

\usepackage[accepted]{tmlr}

\usepackage{microtype}
\usepackage{graphicx}
\usepackage{subcaption}
\usepackage{booktabs}
\usepackage{csquotes}
\usepackage{physics}
\usepackage{mathrsfs}
\usepackage{dsfont}
\usepackage{tikz}
\usepackage{wrapfig}
\usetikzlibrary{arrows.meta, positioning, decorations.markings}
\usepackage{amsmath}
\usepackage{amssymb}
\usepackage{mathtools}
\usepackage{amsthm}
\usepackage{hyperref}
\usepackage{url}
\usepackage[capitalize,noabbrev]{cleveref}

\theoremstyle{plain}

\newtheorem{proposition}{Proposition}

\theoremstyle{definition}

\theoremstyle{remark}

\newcommand{\anisotropicGaussianOne}{figures/gaussian_rmh_across}
\newcommand{\anisotropicGaussianTwo}{figures/gaussian_rmh_along}
\newcommand{\gaussianMultirun}{figures/multirun}

\newcommand{\rosenbrockNonzero}{figures/rosenbrock_hmc_nonzero}
\newcommand{\rosenbrockZero}{figures/rosenbrock_hmc_zero}
\newcommand{\snOneA}{figures/sn1a}
\newcommand{\multirun}{figures/multirun}

\title{Thermodynamic Cyclic Processes\\with Markov Samplers in Bayesian Inference}

\author{\name Heinrich von Campe \email heinrich.campe@iwr.uni-heidelberg.de \\
      \addr Interdisziplin{\"a}res Zentrum f{\"u}r wissenschaftliches Rechnen, Universit{\"a}t Heidelberg\\
      \addr Tübingen AI Center, University of Tübingen \\
      \addr ELLIS Institute Tübingen \\
          \addr Zuse School ELIZA
      \AND
      \name Bj{\"o}rn Malte Sch{\"a}fer \email bjoern.malte.schaefer@uni-heidelberg.de \\
      \addr Interdisziplin{\"a}res Zentrum f{\"u}r wissenschaftliches Rechnen, Universit{\"a}t Heidelberg\\
      \addr Zentrum f{\"u}r Astronomie der Universit{\"a}t Heidelberg, Astronomisches Rechen-Institut}

\def\month{08}
\def\year{2026}
\def\openreview{\url{https://openreview.net/forum?id=88KWqihymD}}

\begin{document}

\maketitle

\begin{abstract}
The concept of Markov chain Monte Carlo (MCMC) cycles, an analogy to cyclic
processes in heat engines, is presented in order to examine Bayesian inference
problems. In this effort, we develop adaptive ensemble schedulers that allow
the tuning of external parameters of a Bayesian canonical ensemble during an MCMC
run, realising the MCMC cycles in practice. We run these cycles on different
statistical models. As a fundamental insight, we find (both theoretically and in
practice) that such systems can produce a non-zero net work output if and only
if the considered model is non-Gaussian. As such, they may serve as a measure of
non-Gaussianity in Bayesian inference, which we test on an example from
supernova cosmology.
\end{abstract}

\section{Introduction}
One of the prime motivations for the development of the theory of thermodynamics
in the 19th century was the study of heat engines as the first apparatuses that
could translate heat into work as a usable form of energy
\citep{history_of_thermo}. In the beginning, the nature of gases, their working
substances, was poorly understood on the microscopic level. Thus, thermodynamics
was developed as an \emph{effective} theory to accurately capture their
macroscopic behaviour nonetheless. Subsequently, thermodynamics became a
powerful theory that could be applied to systems of few as well as many degrees
of freedom. When the \enquote{molecular picture} of matter was established,
statistical physics was developed by Boltzmann and others to bridge the gap to
the macroscopic thermodynamic notions.  Much later,
\citet{jaynes_information_1957} found that thermodynamics and statistical
physics may indeed also be understood as a theory of information.

Conversely, Bayesian inference and Markov Chain Monte-Carlo (MCMC) sampling 
can be studied from the point of view of statistical physics and thermodynamics
as the following construction demonstrates. Bayes' theorem,
\begin{equation} \label{eq:bayes}
    p(\theta \mid y) = \frac{\mathcal{L}(y\mid\theta)\pi(\theta)}{p(y)}\,,
\end{equation}
with the likelihood $\mathcal{L}(y\mid\theta)$, parameter $\theta$, the data $y$,
and the prior $\pi(\theta)$, requires the evidence as a normalization constant
\begin{equation} \label{eq:evidence}
    p(y) = \int \mathrm{d}^n \theta \, \mathcal{L}(y\mid\theta)\pi(\theta)\,.
\end{equation}
Additional parameters, the temperature $T$ and the sources $J$, extend this
integral to a canonical partition sum in the sense of statistical
physics
\begin{equation} \label{eq:canZ}
    Z(T, J) = \int \mathrm{d}^n \theta \left(\mathcal{L}(y\mid\theta)\pi(\theta)\right)^{1/T} e^{J \cdot \theta / T}\,.
\end{equation}
To this one may then apply the tools from statistical mechanics and
thermodynamics to obtain a theory of the Bayesian canonical ensemble and its
extensions \citep{partitionfunction101, partition_functional, partition_info}.
The sources $J \in \mathds{R}^n$ are well known from statistical physics and
quantum field theory; they can be used to obtain the moments (cumulants) of the
posterior by evaluating the (logarithmic) derivative of the partition sum $Z$ at
$T=1$ and $J=0$.  In practice, the integrals often cannot be carried out
analytically, which is why one resorts to Markov chain Monte-Carlo (MCMC)
sampling algorithms such as the Rosenbluth-Metropolis-Hastings (RMH) algorithm
\citep{metropolis,hastings}.  Much like in a physical gas, such algorithms work
by simulating a random walk in the sample space. The samples of the posterior
are then given by the trajectory of the sampling \enquote{particles}. The
thermodynamic point of view has yielded new convergence criteria for MCMC
algorithms \citep{partition_thermal} as well as entirely new sampling algorithms
\citep{partition_macro, mams}.

Aside from sampling, a different use case for the RMH algorithm is called
\emph{simulated annealing} with the aim of finding the global maximum of the
posterior w.\,r.\,t.\@ the position $\theta$ \citep{simulated_annealing}. This
is accomplished by gradually decreasing the temperature $T$ as the sampler runs
such that the sampler's motion is more and more restricted around the maximum.

\paragraph{Contributions}
The main aim of this paper is the construction and study of an analogue to the
cyclic processes of mechanical heat engines in Bayesian inference. We first develop
the necessary mechanisms by generalising the idea of simulated annealing in
order to vary \emph{both} the temperature $T$ and the sources $J$ during a
simulation run.  With this, we establish a new method named MCMC cycles where
both parameters are changed in turn, forming a closed loop in $(T,J)$ space,
just like in heat engines.  We tune the parameters in a thermodynamically
consistent way by introducing \emph{adaptive ensemble schedulers} inspired by
the notion of quasi-staticity (see section \ref{sec:tunable_rmh}). The
thermodynamic theory derived from the partition sum \eqref{eq:canZ} allows us to
make predictions about the exchanged heat and performed work during the MCMC
cycles (section \ref{sec:thermodynamics}). Our further aims are to test whether
they work in practice, whether our theoretical predictions may be confirmed and
what properties these cycles exhibit (section \ref{sec:application}). As we will
see, the MCMC cycles function differently depending on the (non-)Gaussianity of
the sampled density, allowing us to use them as a diagnostic for that which we
test on an example of Bayesian inference in cosmology with type Ia supernovae.
Throughout the paper, we use index notation with the Einstein summation
convention that we briefly describe in Appendix~\ref{app:einstein_summation}.

\paragraph{Related work}
MCMC methods were originally developed for use in statistical physics
\citep{metropolis,hastings} but have since been applied to many other fields,
Lattice QCD \citep{mcmcLattice,hmc_original} and Bayesian inference
\citep{mcmcInference,mcmcImages} among them. The thermodynamic properties of
heat engines have been studied since the 19th century \citep{carnotHeatengines,
clausius_2nd_law, thomson_thermo}. The general thermodynamic nature of Bayesian
inference and MCMC methods has been studied extensively
\citep{jaynes_information_1957,partitionfunction101,partition_thermal,
partition_macro, partition_functional, partition_info}.  While the work by
\citet{cyclical_mcmc_sampling} is related by title, neither the aim nor the
proposed method are connected to ours.

\section{A tunable RMH sampling algorithm} \label{sec:tunable_rmh}
An iteration of the aforementioned RMH algorithm is made up of
$(i)$ proposing a new position $\theta_\text{prop} \sim p_\text{prop}(\cdot \mid \theta_\text{curr})$ from a proposal distribution, typically a Gaussian centered around the current position $\mathcal{N}(\theta_\text{curr}, \Sigma_\text{prop})$; and
$(ii)$ accepting this proposal with probability
\begin{equation}
    p_\text{acc}(\theta_\text{prop} \mid \theta_\text{curr}) =\\ \min\left(1, e^{-\frac{1}{T} (V(\theta_\text{prop};J) - V(\theta_\text{curr};J))}\right)\,,
\end{equation}
where the potential corresponding to the Bayesian canonical ensemble above is
$V(\theta; J) = -\log(\mathcal{L}\pi) - J \cdot \theta$.~\footnote{Please note that the acceptance probability only takes the mentioned
form if the proposal distribution is symmetric,
$p_\text{prop}(\theta_\text{prop} \mid \theta_\text{curr}) =
p_\text{prop}(\theta_\text{curr} \mid \theta_\text{prop})$.  In our case this is
given.}
Considering a potential that depends additionally on the sources $J$ as above, our
aim here is to develop a mechanism that allows tuning of both the temperature
$T$ and the external parameter $J$ while the RMH algorithm runs.

The naive approach would be to implement two schedulers $\sigma_T(t)$ and
$\sigma_J(t)$ that return the current values for the temperature $T$ and
the sources $J$ respectively at simulation step $t$. The acceptance
probability at step $t$ is then modified,
\begin{multline}
    p_\text{accept}(\theta_\text{prop}, \sigma_T(t+1), \sigma_J(t+1) \mid \theta_\text{curr}, \sigma_T(t), \sigma_J(t)) 
    = \min\left(1,  e^{ - \frac{V(\theta_\text{prop}; \sigma_J(t+1))}{\sigma_T(t+1)} + \frac{V(\theta_\text{curr}; \sigma_J(t))}{\sigma_T(t)} }\right)\,,
    \label{eq:accept_prob_dynamic}
\end{multline}
taking into account the change in position $\theta$, temperature $T$ and sources
$J$. The central question for this setup is then how slowly the external
parameters should be changed. For example, if one decreases the temperature very
quickly while $\theta_\mathrm{curr}$ is at a local minimum of $V$, the first
term in the exponential in~\eqref{eq:accept_prob_dynamic} will dominate such
that $p_\text{accept} \to 0$ and once the temperature has reached a sufficiently
low value, most random-walk proposals away from the local minimum are rejected
with high probability, so the sampler effectively becomes stuck. 

An excellent criterion for this is that of \emph{quasi-staticity} from
thermodynamics. It states that the time scale at which a thermodynamic system
should be perturbed externally should be much larger than the time scale at which the
system equilibrates, such that at any point during the induced change
of state, the system remains in thermodynamic equilibrium.~\footnote{In practice
it is enough for the sampler to remain sufficiently close to equilibrium
s.\,t.\@ the macroscopic description remains valid.} Note that this does not
mean that the change of state must happen \enquote{slowly} by the standards of
human perception (e.g., the cyclic processes in heat engines are accurately
described by thermodynamics while they certainly do not look slow to the human
eye). Furthermore, the notion of quasi-staticity is different from
reversibility, which will be discussed below for the use case of sampling
algorithms.

During a normal RMH step, the scale of the proposal step size is typically chosen
such that $V(\theta_\text{prop}) - V(\theta_\text{curr}) \propto T$. Thus, with
the reasonable assumption that $V$ is smooth in $\theta$, $J$ and $T$, one
should choose the scale at which the external parameters are changed such that
\begin{equation}
    V(\theta; \sigma_J(t+1)) - V(\theta; \sigma_J(t)) \ll T\,.
\end{equation}
For a transition from $J = J_i$ to $J_f$, we do so by picking some small
$\epsilon > 0$ and choosing the step size $\Delta J$ such that
$|V(J + \Delta J) - V(J)| \approx |\Delta J \cdot \nabla_J V| = \epsilon T$.
This is achieved by
\begin{equation}
    \Delta J = \min\left(\Delta J_\text{max}, \frac{\epsilon T (J_f - J_i)}{|\nabla_J V \cdot (J_f - J_i)|}\right)\,,
\end{equation}
where we introduced a maximum step size $\Delta J_\text{max}$ to mitigate
division by zero. By choosing $\epsilon$ sufficiently small, the effect on the
acceptance probability remains small by the change in the external parameter $J$
and the sampling continues adapting in real time, remaining close to
thermodynamic equilibrium. Additionally, instead of a single gradient we run
multiple chains at the same time and use the average ensemble gradient $\langle
\nabla_J V \rangle_\mathrm{chains}$ as criterion for the step size.  This is
again inspired by physical heat engines where the combined motion of many
particles drives the macroscopic motion of the piston. For the potential
corresponding to the canonical partition sum \eqref{eq:canZ}, we find $\nabla_J
V = -\theta$.

Similarly, for changes in temperature from $T = T_i$ to $T_f$, we pick $\Delta T
= \epsilon T$ such that we find $\sigma_T(t+1) = (1 + \epsilon) \sigma_T(t)$.
This means that we may write
\begin{equation}
    \sigma_T(t) = (1 + \epsilon)^t T_i\,,
\end{equation}
corresponding to an \emph{exponential scheduler} in simulated annealing
\citep{simulated_annealing}.

While these schedulers are inspired by quasi-staticity, they do not generally
guarantee it. In particular, convergence guarantees for simulated annealing at
low temperatures require a scheduler of the form $\sigma_T(t) \sim c / (1 + \log
t)$ \citep{logarithmic_cooling_schedule}. For our examples the exponential
scheduler works sufficiently well (as we will see by comparing the numerical
results to theoretical predictions) but please be aware that one may have to use
a more robust scheduler for more complicated systems.

\section{Thermodynamic Concepts for MCMC Cycles} \label{sec:thermodynamics}

\begin{figure}[htbp]
    \centering
        \begin{minipage}{0.45\linewidth}
        \centering
        \begin{tikzpicture}[scale=0.8, transform shape, >=Stealth, decoration={markings, mark=at position 0.5 with {\arrow{>}}}]
                        \node (TL) at (0, 4) {$(V_s, T_h)$};
            \node (TR) at (5, 4) {$(V_l, T_h)$};
            \node (BR) at (5, 0) {$(V_l, T_c)$};
            \node (BL) at (0, 0) {$(V_s, T_c)$};

                        \draw[postaction={decorate}] (TL) -- (TR) node[midway, above] {Isothermal expansion};
            \draw[postaction={decorate}] (TR) -- (BR) node[midway, right, align=left] {Isochoric\\cooling};
            \draw[postaction={decorate}] (BR) -- (BL) node[midway, below] {Isothermal compression};
            \draw[postaction={decorate}] (BL) -- (TL) node[midway, left, align=right] {Isochoric\\heating};

            \node at (2.5, -1.5) {\textbf{Stirling cycle}};
        \end{tikzpicture}
    \end{minipage}
    \hfill
        \begin{minipage}{0.45\linewidth}
        \centering
        \begin{tikzpicture}[scale=0.8, transform shape, >=Stealth, decoration={markings, mark=at position 0.5 with {\arrow{>}}}]
                        \node (TL) at (0, 4) {$(J_{(1)}, T_h)$};
            \node (TR) at (5, 4) {$(J_{(2)}, T_h)$};
            \node (BR) at (5, 0) {$(J_{(2)}, T_c)$};
            \node (BL) at (0, 0) {$(J_{(1)}, T_c)$};

                        \draw[postaction={decorate}] (TL) -- (TR) node[midway, above] {Isothermal};
            \draw[postaction={decorate}] (TR) -- (BR) node[midway, right] {Iso-$J$};
            \draw[postaction={decorate}] (BR) -- (BL) node[midway, below] {Isothermal};
            \draw[postaction={decorate}] (BL) -- (TL) node[midway, left] {Iso-$J$};

            \node at (2.5, -1.5) {\textbf{MCMC cycle}};
        \end{tikzpicture}
    \end{minipage}
    \caption{Comparison of a thermodynamic Stirling cycle (left) and its information theoretical analogue, the MCMC cycle (right).}
    \label{fig:cycles_diagram}
\end{figure}

With the canonical partition function defined above \eqref{eq:canZ}, one may
define the free energy $F(T, J) = -T \log Z(T, J)$. Its total differential
$\dd F(T, J) = -S \dd T + \Lambda \cdot \dd J$ then allows the deduction of the
thermodynamic entropy,
\begin{equation} \label{eq:entropy}
S = -\pdv{F}{T}\,,
\end{equation}
as well as the quantity
$\Lambda_i = \partial F / \partial J_i$, which corresponds to volume or
pressure for systems with gases as their working substance in their physical
interpretation, but corresponds as well to the first moment of the posterior
distribution. For our particular case of a canonical partition function, the
entropy $S$ at unit temperature is in fact Shannon's entropy of the posterior
distribution, providing yet another link between thermodynamics and statistics.

The first law of thermodynamics relates the change of the internal energy $\dd U$
to the heat exchanged with the system $\delta Q$ and work performed on it
$\delta W$ for any quasi-static process; it states $\dd U = \delta Q + \delta
W$. Such a process is additionally called reversible if it may be undone without
lasting change to the environment. On a microscopic level, this means that there
is no internal production of entropy and one may write $\delta Q = T \dd S$.
A classic example of an irreversible process would involve the dissipation of
heat through friction which then corresponds to such a production of internal
entropy. Since there is no mechanism in the proposed MCMC cycles that would
correspond to a \enquote{friction term} or similar, we conclude that they should
be generally reversible. 

Thermodynamic cyclic processes in general are a series of changes of state that
form a closed loop in the sense that the final state of the system at
each cycle is the same as the initial one. As a classical example, consider
the Stirling cycle \citep{engineering_thermodynamics}. The idealized Stirling
engine contains an ideal gas and the cycle is made up of four phases that are
described in Figure~\ref{fig:cycles_diagram} (left). In the first phase ($1 \to 2$),
the gas expands isothermally from volume $V_s$ to $V_l$ while in contact with a
hot thermostat at temperature $T_h$. Then ($2 \to 3$) the gas is cooled by
contact with a colder thermostat from $T_h$ to $T_c$ while retaining the same
volume $V_l$. The step ($1 \to 2$) is then mirrored in ($3 \to 4$) by a
compression from $V_l$ to $V_s$ at the low temperature $T_c$. Finally, for ($4
\to 1$), the contact switches again to the hot thermostat to heat up the gas
from $T_c$ to $T_h$ while maintaining the volume $V_s$.

The Stirling cycle and engine have many interesting thermodynamic properties
\citep{engineering_thermodynamics}. For us, it is mainly an inspiration for the
MCMC cycles depicted in Figure~\ref{fig:cycles_diagram} (right). The sources $J$
assume the role of the volume $V$ here. Please note that the question of whether
the sources $J$ are intensive (like the pressure $P$) or extensive (like the
volume $V$) in nature is not straightforward to answer. This has been discussed by
\citet{partition_info}, also by taking the perspective of information geometry.
Since it is not relevant to our discussion, we merely use the Stirling cycle as
an analogy.  (The corresponding \enquote{intensive} thermodynamic cycle where
the changing parameters are $(P, T)$ is in fact the Ericsson cycle.) 

To implement the cycle, we use adaptive ensemble schedulers $\sigma_T$
for the temperature and $\sigma_J$ for the sources. We predefine the target
temperatures $(T_c, T_h)$ and sources $(J_{(1)}, J_{(2)})$ and let the scheduler run
until the desired values are reached for each phase. In contrast to the
volume or pressure in classical thermodynamics, the sources $J$ form a
(co-)vectorial quantity. This means that the behaviour of the sampler depends
non-trivially on the choice of path connecting $J_{(1)}$ and $J_{(2)}$. With our
scheduler, it will always be a straight line s.\,t.\@ the results will only
depend on the choice of end points $J_{(1)}$ and $J_{(2)}$.  The outputs of a run of
such a cycle are the samples $\{\theta_{t,c} \in \mathds{R}^n \mid t = 1, \dots,
N, c = 1, \dots, N_\mathrm{chains}\}$ as well as the trajectories of the
temperatures $\{T_t \in \mathds{R} \mid t = 1, \dots, N\}$ and the sources
$\{J_t \in \mathds{R}^n \mid t = 1, \dots, N\}$.

To analyse the data, we compute the change of the internal energy by finite
differencing, $\dot{U}_t = \langle V(\theta_t; J_t) - V(\theta_{t-1}; J_{t-1})
\rangle_\mathrm{chains}$, the rate of the work performed as $\dot{W}_t = \langle
\Lambda_t \cdot (J_t - J_{t-1}) \rangle_\mathrm{chains}$ and the rate of the
exchanged heat as $\dot{Q}_t = \dot{U}_t - \dot{W}_t$ with the first law of
thermodynamics. All of these may be integrated numerically to yield the change
in internal energy $\Delta U_t$, exchanged heat $\Delta Q_t$, and performed work
$\Delta W_t$ as a function of time.

Regarding heat engines, the most important properties of any cyclic
process are the net work \citep{engineering_thermodynamics}
\begin{equation} \label{eq:net_work_def}
    \Delta W_\text{net} = \oint_\text{one cycle} \hspace{-2em} \dd W
\end{equation}
per cycle performed by the engine as well as the heat input $\Delta
Q_\text{in}$. They allow us to define the efficiency $\eta = \Delta W_\text{net} /
\Delta Q_\text{in}$ which measures how well the engine \enquote{translates} the
input energy in the form of heat into work. We will compute these
quantities for MCMC cycles with different statistical models.

To compare them to a theoretical prediction, we compute the entropy $S$ from the
free energy and its partial derivatives given by the heat capacity $C = T
\partial S / \partial T$ as well as $h^i = T \partial S / \partial J_i$ to find 
\begin{equation}
\dot{Q} = T \dot{S} = C \dot{T} + h \cdot \dot{J}\,. \label{eq:heat_rate}
\end{equation}
If this prediction turns out to correspond to the realised behaviour in our
numerical experiments, it confirms that the construction of the MCMC cycles and
the construction of the thermodynamic theory are correct, in particular the fact
that the cycles are reversible. The latter can also be tested by computing
$\Delta S_\mathrm{diss} = \oint \delta Q / T$ which should be compatible with
zero within tolerance.

\section{Application} \label{sec:application}

\subsection{Gaussian Toy Model} \label{sec:gauss}

\begin{figure}[htbp]
    \centering
    \includegraphics[width=.48\textwidth]{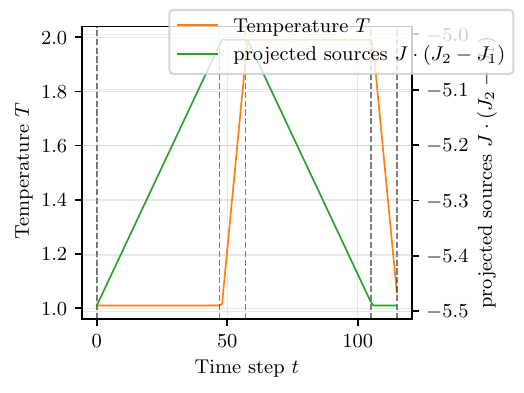}
    \includegraphics[width=.48\textwidth]{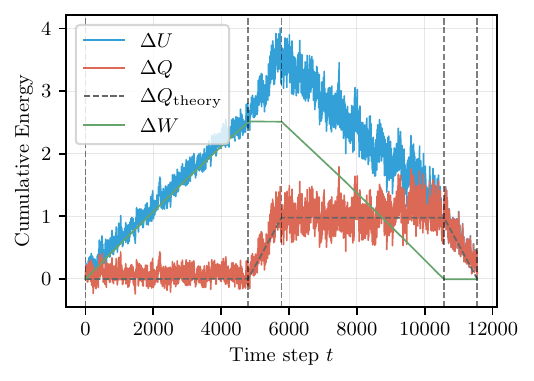}
    \caption{Left: Trajectories of the parameters $T$ and $J$ during an MCMC cycle in an
        anisotropic Gaussian model as a function of simulation time.
        Right: Energy changes during an MCMC cycle in an anisotropic Gaussian
        model along with a theoretical prediction for the heat change. The dotted
        vertical lines differentiate the four phases of the cycle.}
    \label{fig:anisotropic_gaussian_parameter_trajectories}
    \label{fig:anisotropic_gaussian_energy}
\end{figure}

As a first example, consider a Gaussian likelihood
\begin{equation} \label{eq:gaussian_likelihood}
\mathcal{L} \propto \exp(-\frac{1}{2} F_{ij} \theta^i \theta^j)\,,
\end{equation}
with Fisher information $F$ and flat prior $\pi \equiv 1$. (Any constant
normalisation factors would drop out in the following calculations, so we ignore
them.) For such a model one may compute the partition function analytically 
\begin{equation}
    Z(T, J) = \sqrt{\frac{(2\pi T)^n}{\det F}} \exp\left(\frac{1}{2T} F^{ij} J_i J_j\right)\,,
\end{equation}
with the components $F^{ij}$ of the inverse Fisher matrix such that one finds
the entropy \eqref{eq:entropy} 
\begin{equation}
    S(T, J) = \frac{n}{2} \log(2\pi T) + \frac{n}{2} - \frac{1}{2} \log \det F\,,
\end{equation}
and thus the caloric coefficients $C(T, J) = n/2$ and $h^i(T, J) =
0$ \citep{partitionfunction101}. The theoretical rate at which the heat is
exchanged with the system is 
\begin{equation} \label{eq:q_dot_theory_gauss}
\dot{Q} = \frac{n}{2} \dot{T}\,.
\end{equation}
This result
means that during the iso-$J$ phase of the cycle the amount $\Delta Q =
\frac{n}{2} (T_h - T_c)$ is exchanged with the environment. For both
phases, it is the same amount with a flipped sign, implying $\Delta Q_\text{net}
= 0$. And since $\Delta W_\text{net} = \Delta U_\text{net} - \Delta Q_\text{net}
= 0$, there is also no work performed by the system. The internal energy
is the same after the cycle as before, since the system returns to its
initial state. These results are calculated in more detail in
Appendix~\ref{app:toy_model_thermodynamics}.

To test whether this theoretical prediction holds in practice, we run multiple
cycles with this system. In this numerical experiment in two dimensions we set $F = \text{diag}(1,
10)$, $J_{(1)} = (5.5, 0)$, $J_{(2)} = (5, 0)$, $T_c = 1.0$, $T_h = 2.0$.

\begin{figure}[htbp]
    \centering
    \includegraphics[width=.49\linewidth]{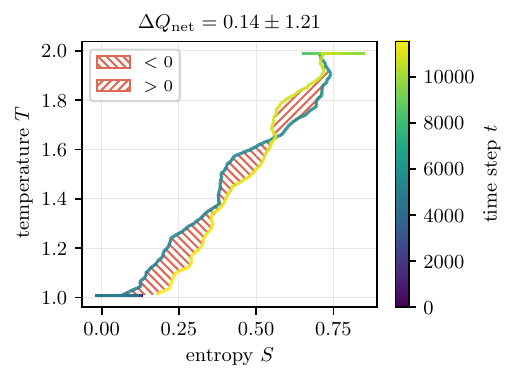}
    \includegraphics[width=.49\linewidth]{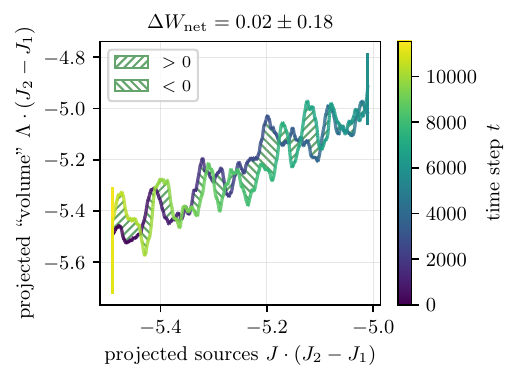}
    \caption{Illustration of the net exchanged heat (left) and net performed work
    (right) during one cycle. Since the heat and work forms are given by
    $\delta Q = T \dd S$ and $\delta W = \Lambda \cdot \dd J$ respectively, the
    areas enclosed by the trajectory correspond to the net exchanged heat and
    net performed work during one cycle.  To reduce noise, the data are
    smoothed with a moving average with periodic boundary conditions of
    window size 200 simulation steps.}
    \label{fig:anisotropic_gaussian_heat_work}
\end{figure}

In Figure~\ref{fig:anisotropic_gaussian_parameter_trajectories} (left), the
trajectories of temperature and sources are plotted. One may observe that the
exponential temperature scheduler does indeed deserve its name. While the
iso-$J$ phases look quite exponential in the temperature $T$, in contrast to
this, the isothermal trajectories look rather linear in $J$. This is because the
average gradient $\langle \nabla_J V \rangle_\mathrm{chains}$ is too small such
that the scheduler almost always takes the maximum step size $\Delta
J_\text{max}$.

In Figure~\ref{fig:anisotropic_gaussian_energy} (right) we plot the observed energy
changes $\Delta U, \Delta W, \Delta Q$ along with the theoretical predictions
$\Delta Q_\text{theory}$ made above. To reduce the considerable amount of noise
we average over $10^2$ chains.  The work $\Delta W$ has much less noise than the
energy change $\Delta U$, since for an individual chain $\dot{W} \propto \theta
\cdot \dot{J}$ (where $\dot{J}$ is the same for all chains) while $\dot{U} \propto
\theta \cdot \dot{\theta}$. Thus we expect the noise on the energy change to be
much larger compared to the noise of the work change.

It is clearly visible that $\Delta Q = 0$ in the isothermal phases, while
$\Delta W = 0$ in the iso-$J$ phases, confirming the theoretical predictions.
Also, the noise of the energy change $\Delta U$ is greater for $T = T_h$
compared to $T = T_c < T_h$, which also makes sense since the sampling particles
have more freedom to oscillate around the expectation value of $\theta$ in
this phase. For comparison, we also plot the theoretical predictions
\eqref{eq:q_dot_theory_gauss} for the heat change $\Delta Q_\text{theory}$,
which agree nicely with the observations. The fact that $\Delta U_\mathrm{net}
= 0$ after a full cycle is to be expected for any model as described before.
Additionally, this plot conveys that the system is not performing any net work,
since $\Delta W_\mathrm{net} = \Delta Q_\mathrm{net} = 0$ after a full cycle.

The exchanged heat and work may also be read off
Figure~\ref{fig:anisotropic_gaussian_heat_work}. On the right, we plot the
temperature $T$ against the entropy $\dd S = \delta Q/T$ (again we apply a
moving average over 200 simulation steps to reduce noise). Since the net heat
exchange over one cycle is given by $\Delta Q_\text{net} = \oint T \dd S$, it is
given in the plot as the area enclosed by the trajectory. On the left of
Figure~\ref{fig:anisotropic_gaussian_heat_work}, we made a similar plot for the
quantity $\Lambda = \nabla_J H$ against the sources $J$. Since the scheduled
source $\sigma_J(t)$ always lies on a line connecting $J_{(1)}$ and $J_{(2)}$, we
equivalently consider the projection of $J$ and $\Lambda$ on their difference
$J_{(2)} - J_{(1)}$. Again one observes that the net work $\Delta W_\text{net} = \oint
\Lambda \cdot \dd J$ vanishes.  In Figure~\ref{fig:anisotropic_gaussian_summary} (left)
all energy measurements concerning this experiment are summarised. Again, the 
higher temperature in phase $(3 \to 4)$ leads to larger fluctuations and the net
performed work and exchanged heat are zero within tolerance. To demonstrate
that one may arbitrarily reduce the noise of the heat, we plot $\Delta Q_\mathrm{net}$
against the number of cycles in Figure~\ref{fig:gaussian_q_net_vs_n_cycles} (right),
maintaining compatibility with zero up to $\sim 10^2$ cycles.

\begin{figure}[tbp]
    \centering
    \includegraphics[width=.48\textwidth]{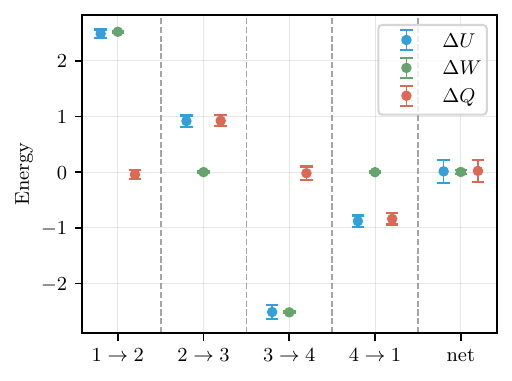}
    \includegraphics[width=.48\textwidth]{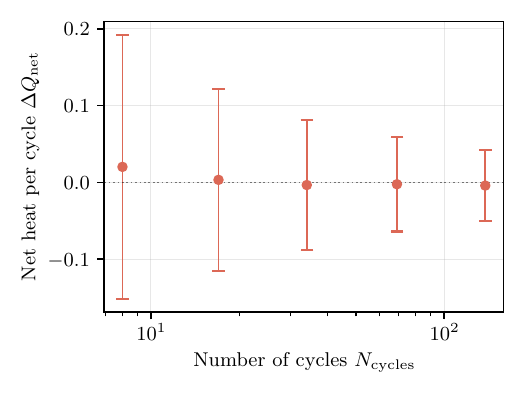}
    \caption{Left: Total energy changes per phase of an MCMC cycle in an anisotropic
        Gaussian model. The results are averaged over $10^2$ chains and six
        analysed cycles (after excluding two warm-up cycles).
        Right: Net heat per cycle $\Delta Q_{\mathrm{net}}$ as a function of the
        number of cycles for the anisotropic Gaussian model.}
    \label{fig:anisotropic_gaussian_summary}
    \label{fig:gaussian_q_net_vs_n_cycles}
\end{figure}

\begin{figure}[htbp]
    \centering
    \includegraphics[width=.48\textwidth]{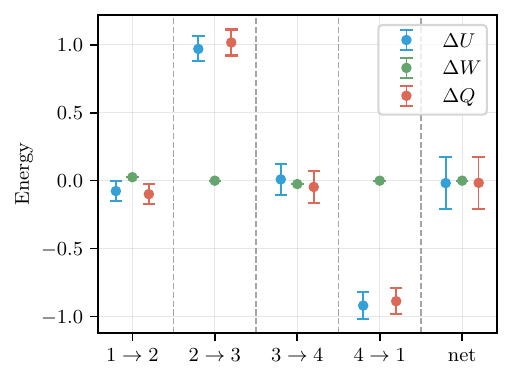}
    \caption{Results of an MCMC cycle in an anisotropic Gaussian model with
        different target sources. The performed work in phases $(1 \to 2)$ and $(3 \to 4)$ is much less than before.}
    \label{fig:anisotropic_gaussian_J_along}
\end{figure}

\paragraph{Gaussian Toy Model with different cycle trajectory}
As a second example, we consider the same model as above
\eqref{eq:gaussian_likelihood}, but this time we choose the sources $J_{(1)} = (0,
5.5)$, $J_{(2)} = (0, 5)$ to lie on a different axis. As above, the Gaussian is
anisotropic with the Fisher matrix $F = \text{diag}(1, 10)$ such that the
typical set has the shape of an ellipsoid which is elongated in
$\theta_1$-direction. In the previous example, we chose the sources along the
elongated direction ($J$ in $\theta_1$). This time, they are pointed across
it ($J$ in $\theta_2$).

The consequence of this may be observed in
Figure~\ref{fig:anisotropic_gaussian_J_along}, which corresponds to
Figure~\ref{fig:anisotropic_gaussian_summary} (left) for the above experiment. The work
performed in the isothermal phases $(1 \to 2)$ and $(3 \to 4)$ of the cycle is
much smaller. This was to be expected since the system has to work against much
less resistance in this case as the potential varies less in this direction.

\subsection{Rosenbrock Toy Model} \label{sec:rosenbrock}
As an example of a non-linear and consequently non-Gaussian model, we consider
the Rosenbrock likelihood
\begin{equation} \label{eq:rosenbrock_def}
\mathcal{L} \propto \exp(-\frac{1}{2} \left( (a \theta^0)^2 + (b (\theta^1 - (c \, \theta^0)^2))^2 \right))\,,
\end{equation}
with $a = 2$, $b = 10$ and $c = 0.5$ and a flat prior $\pi \equiv 1$. We use
$T_c = 1.0$ and $T_h = 3.0$ for both cycle trajectories. For the zero-work
trajectory we set $J_{(1)} = (6, 0)$ and $J_{(2)} = (5, 0)$, whereas for the
nonzero-work trajectory we set $J_{(1)} = (0, -4)$ and $J_{(2)} = (0, -6)$.
Again, the partition sum may be computed analytically. Substituting
$u=\theta^1-c^2(\theta^0)^2$, we are left with two Gaussian integrals that yield
\begin{equation}
Z(T,J)
=
\frac{2\pi T}{b\sqrt{a^2-2c^2J_1}}
\exp\left[
\frac{1}{2T}
\left(
\frac{J_1^2}{b^2}
+
\frac{J_0^2}{a^2-2c^2J_1}
\right)
\right],
\end{equation}
for $a^2-2c^2J_1>0$ (see Appendix~\ref{app:toy_model_thermodynamics} for a 
more detailed calculation). The corresponding entropy is
\begin{equation}
S(T,J)
=
1+\log(2\pi T)-\log b
-\frac12\log(a^2-2c^2J_1),
\end{equation}
leading to
\begin{equation} \label{eq:q_dot_theory_rosenbrock}
C=T\frac{\partial S}{\partial T}=1,
\qquad
h^0=T\frac{\partial S}{\partial J_0}=0,
\qquad
h^1=T\frac{\partial S}{\partial J_1}
=
\frac{Tc^2}{a^2-2c^2J_1}
\quad
\rightarrow
\quad
\dot Q
=
\dot T
+
\frac{Tc^2}{a^2-2c^2J_1}\dot J_1 .
\end{equation}
Since the entropy has an explicit dependence on both $J$ and $T$
in contrast to the purely Gaussian case, it is to be
expected that the exchanged heat during the iso-$J$ phases depends on $J$,
thus $\Delta Q_{12} \neq -\Delta Q_{34}$. This means that $\Delta Q_\text{net}
\neq 0$ and consequently a non-zero amount of work $\Delta W_\text{net} \neq 0$
is performed during a cycle. However, as may be seen in
\eqref{eq:q_dot_theory_rosenbrock} this is only the case if component $1$ of the
sources varies during the cycles. If merely component $0$ varies, there will be
no net work output.

To improve the results of this experiment and test our model on a different method,
this time we use Hamiltonian Monte Carlo \citep{hmc_original}. Essentially, this
means that the proposal in step $(i)$ of our tunable RMH algorithm in
section~\ref{sec:tunable_rmh} is modified to use the endpoint of an integrated HMC
trajectory. By this we hope to increase the speed with which the sampler adapts
to the changes in the external parameters; otherwise, the algorithm remains
unaltered.  After a series of cycles with this model, the data are analysed as
above for the Gaussian case and the results are summarised in
Figure~\ref{fig:rosenbrock_summary}. While the plots display similarities to the
former case, the lower row conveys that the net performed work as well as the
net heat exchanged are non-zero this time.  While the overall change in internal
energy is zero after every cycle, the heat input rises gradually while
accumulated performed work declines equivalently after every cycle (so in this
case, we have an information theoretic heat pump rather than a heat engine).

\begin{figure}[p]
    \centering
    \includegraphics[width=.48\linewidth]{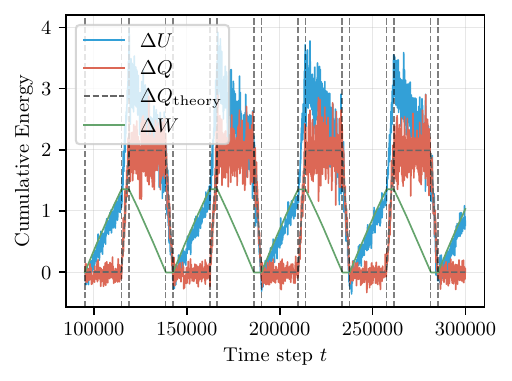}
    \includegraphics[width=.48\linewidth]{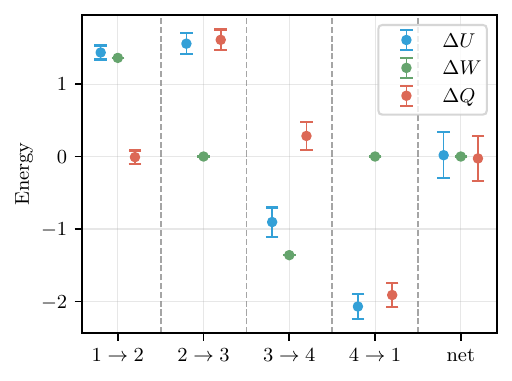}
    \hspace{.7em}
    \includegraphics[width=.48\linewidth]{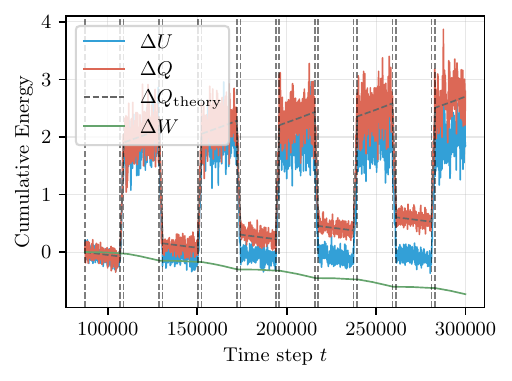}
    \includegraphics[width=.48\linewidth]{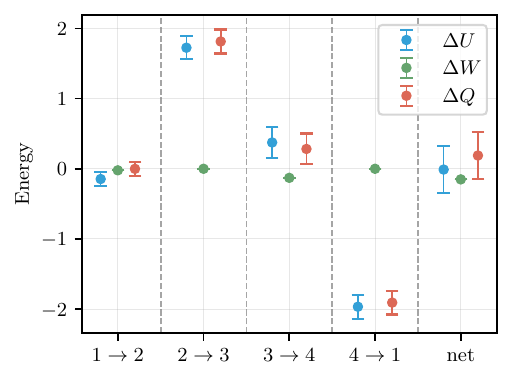}
    \captionsetup{type=figure,hypcap=false}
    \caption{Results for the Rosenbrock model corresponding to Figures
    \ref{fig:anisotropic_gaussian_energy} and
    \ref{fig:anisotropic_gaussian_summary} for the anisotropic Gaussian model.
    In contrast to before, a non-zero amount of net work is performed by the
    system if the sources are changed along direction $1$ (lower row).}
    \label{fig:rosenbrock_summary}
\end{figure}

\begin{figure}[p]
    \centering
    \includegraphics[width=.48\textwidth]{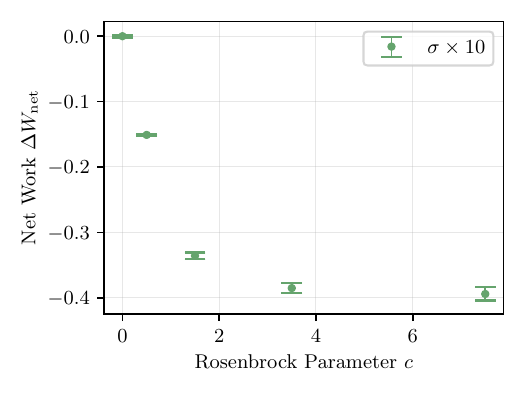}
    \caption{Net performed work as a function of the parameter $c$ in the Rosenbrock model. The more non-Gaussian the model becomes, the greater is the work output.}
    \label{fig:rosenbrock_net_work_vs_c}
\end{figure}

To further study this phenomenon, we ran this experiment for different
choices of the parameter $c$ in the Rosenbrock likelihood. In
Figure~\ref{fig:rosenbrock_net_work_vs_c}, the net performed work is plotted
against it. This sweep uses the nonzero-work cycle trajectory and varies $c$
from 0 to 8. At $c=0$ the model is Gaussian, thus the output is zero within
tolerance. The greater the parameter becomes, the more work is produced.

\subsection{General non-Gaussian model}
Having demonstrated that MCMC cycles with purely Gaussian models always produce
vanishing net work, $\Delta W_{\mathrm{net}} = 0$, we would like to show the
converse direction.

\begin{proposition}
Consider any non-Gaussian model $\mathcal{L}(\theta) \propto e^{-V(\theta)}$
with a smooth potential $V$ that has a non-degenerate global minimum. Construct
an MCMC cycle in $(T,J)$ space as described in Sec.~\ref{sec:thermodynamics}
where the path $J_{12}(t)$ connecting $J_{(1)}$ and $J_{(2)}$ in the isothermal
phase $(1 \to 2)$ is the inverse of the path used in phase $(3 \to 4)$. It is
possible to choose this cycle such that the sampler will perform a non-zero
amount of net work as defined in \eqref{eq:net_work_def} along it.
\end{proposition}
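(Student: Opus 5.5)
Because $\dd U$ is exact, $\Delta U_\text{net}=0$ over a closed cycle, and the first law gives $\Delta W_\text{net} = -\Delta Q_\text{net} = -\oint T\,\dd S$. So it suffices to find a cycle with $\oint T\,\dd S \neq 0$. Along the isothermal phases $T$ is constant, and phase $(3\to4)$ retraces the path of phase $(1\to2)$ backwards. Hence the heat exchanged there is $T_h\,[S(T_h,J_{(2)})-S(T_h,J_{(1)})] + T_c\,[S(T_c,J_{(1)})-S(T_c,J_{(2)})]$. On the iso-$J$ phases $\delta Q = C\,\dd T$. Summing the four phases and writing everything in terms of the free energy, I would show
\begin{equation*}
\Delta Q_\text{net} = \int_{T_c}^{T_h}\!\!\dd T \int_{J_{(1)}}^{J_{(2)}} \dd J \cdot \nabla_J S(T,J)\,,
\end{equation*}
i.e.\ a Stokes-type area integral of $\partial_T(T\,\partial_J S) - \partial_J(T\,\partial_T S)\cdot$ over the rectangle. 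Using $h = T\nabla_J S$ and $C = T\partial_T S$, the integrand reduces to $\nabla_J S$. Equivalently, $\oint\Lambda\cdot\dd J$ equals the $(T,J)$-area integral of $\partial_T\Lambda = -\nabla_J S$. The proof therefore reduces to the following claim: for a non-Gaussian model, $\nabla_J S(T,J)$ is not identically zero on an open set of $(T,J)$. Given that, I would choose a small rectangle on which some directional derivative $v\cdot\nabla_J S$ has a fixed sign, with $J_{(2)}-J_{(1)}\parallel v$. The integral is then nonzero.

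To prove the claim, I would use the cumulant structure of $\log Z$. We have $\log Z(T,J) = \log\int e^{-(V(\theta)-J\cdot\theta)/T}\dd^n\theta$, and $S = -\partial_T F = \log Z + T\partial_T\log Z$. Suppose $\nabla_J S\equiv 0$ on an open set. Then $S$ depends on $T$ alone there, so $\partial_T F(T,J) = -s(T)$, which gives $F(T,J) = g(J) + f(T)$, and analyticity in $J$ extends this identity. At fixed $T$, $-F/T = \log Z$ is the cumulant generating function in $J/T$ of the tilted density. The separable form forces $\log Z(T,J) = -g(J)/T - f(T)/T$.

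Now use the scaling limit $T\to0$. A Laplace expansion around the non-degenerate minimum $\theta_J$ of $V-J\cdot\theta$ gives $\log Z = -\min_\theta(V-J\cdot\theta)/T + \tfrac n2\log(2\pi T) - \tfrac12\log\det\nabla^2V(\theta_J) + O(T)$. Matching this against the separable form forces $\log\det\nabla^2 V(\theta_J)$ to be independent of $J$. I would use this, together with the higher orders of the expansion, or alternatively the identity $\nabla_J S = -\partial_T\langle\theta\rangle_{T,J}$, which says the tilted mean must be $T$-independent. The goal is to conclude that all cumulants of order $\ge3$ vanish, i.e.\ that $V$ is quadratic, contradicting non-Gaussianity.

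The cleanest route is probably the identity $\partial_T\langle\theta^i\rangle = \mathrm{Cov}(\theta^i, V-J\cdot\theta)/T^2$, evaluated at $J$ near the value tilting onto the global minimum. At small $T$ this covariance is governed by the third derivatives of $V$, so $\nabla_J S = O(\partial^3V)$ at leading order.

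The main obstacle is this last step. Non-Gaussianity might hide in higher derivatives, with $\partial^3 V$ vanishing at every tilted minimum. For example, $\nabla^2 V$ could be constant along the set of minima while $V$ is still not quadratic. I would handle this by noting that the minimisers $\theta_J$ sweep an open region as $J$ varies near $\nabla V(\theta_\ast)$, by non-degeneracy and the inverse function theorem. The vanishing of the leading-order $T$-dependence of $\langle\theta\rangle$ then forces a polynomial identity on $\partial^3V$ and $\partial^4V$ over an open set. Iterating the Laplace expansion order by order forces all derivatives of order $\ge3$ to vanish there. Smoothness, or analyticity if it must be assumed, then makes $V$ globally quadratic, which is the contradiction.
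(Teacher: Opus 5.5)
Your reduction is correct and is in substance the paper's. Since $\Lambda(T,\cdot)=\nabla_J F(T,\cdot)$, the net work equals $\int_{T_c}^{T_h}\bigl[S(T,J_{(1)})-S(T,J_{(2)})\bigr]\,\mathrm{d}T$ directly; you do not need the first law or $\delta Q=T\,\mathrm{d}S$. So it can be made nonzero precisely when $\partial_T\Lambda=-\nabla_J S\not\equiv 0$, which is the paper's condition $\Lambda_{\mathrm{NG}}(T_h,\cdot)\neq\Lambda_{\mathrm{NG}}(T_c,\cdot)$. (Minor: $\nabla_J S=+\partial_T\langle\theta\rangle$, not minus.)

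The gap is in the step that carries all the content, namely that $\nabla_J S\equiv0$ forces $V$ to be quadratic. The leading-order condition you obtain is not $\nabla^3V=0$ but only its trace, $G^{jk}\partial_i\partial_j\partial_kV=0$ with $G=(\nabla^2V)^{-1}$. Equivalently, $\det\nabla^2V$ is constant on the region swept by $\theta_J$. Your sweeping argument disposes of the worry about $\partial^3V$ vanishing only at minimisers. However, for $n\ge2$ this Monge--Amp\`ere equation has non-quadratic convex local solutions. For example, $V(x,y)=\tfrac12x^2-\tfrac{b}{6}x^3+\tfrac{y^2}{2(1-bx)}$ has a non-degenerate minimum at the origin, with $\nabla^2V>0$ and $\det\nabla^2V\equiv1$ on $\{bx<1\}$.

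Everything therefore rests on ``iterating the Laplace expansion order by order'', for which you give no argument. This is not bookkeeping. Each further order only adds the constancy of one more scalar invariant involving two more derivatives of $V$, so there is no evident induction that produces $\nabla^3V=0$. Already at the next order, the first condition and its derivative $G^{ab}V_{ijab}=G^{ac}G^{bd}V_{iab}V_{jcd}$ reduce the requirement to constancy of $V_{ijk}V_{lmn}G^{il}G^{jm}G^{kn}$. You then need a reason why this constant must be zero. In two dimensions one can get it from the structure equations of improper affine spheres, but that is genuine additional input. The paper's proof is equally brief here: it infers the temperature dependence of $\Lambda_{\mathrm{NG}}$ directly from $Z_{\mathrm{NG}}\neq1$. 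So you have isolated the real crux, but your proposal does not close it.

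Your final globalisation step is also not valid as stated. Smoothness does not propagate quadraticity from a neighbourhood of the minimum to all of $\mathbb{R}^n$. Use the separable form instead. If $V$ is quadratic near the minimum, then $g(J)=\min_\theta(V-J\cdot\theta)+\text{const}$ is quadratic for small $J$. Since $g(J)=F(T_0,J)-f(T_0)$ is real-analytic on the convex $J$-domain, it is quadratic throughout. Hence $\log Z(1,J)-\log Z(1,0)$ is a quadratic cumulant generating function on a neighbourhood of $J=0$. Uniqueness of the moment generating function then forces $e^{-V}$ to be a Gaussian density.
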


\begin{proof}
Without loss of generality, place the potential's global minimum at $V(\theta =
0)=0$, and formally expand around it,
\begin{equation}
    V(\theta)
    =
    \frac{1}{2} F_{ij} \theta^i \theta^j
    +
    \widetilde{V}(\theta),
\end{equation}
where $F_{ij}$ is the Hessian evaluated at the minimum and
$\widetilde{V}(\theta)$ contains all higher-order terms. With this, we
may write the partition function as
\begin{align}
    Z(T,J)
    &=
    \int \dd^n \theta\,
    \exp\left[
        -\frac{1}{T} V(\theta)
        +
        \frac{1}{T} J_i \theta^i
    \right]
    =
    \int \dd^n \theta\,
    \exp\left[
        -\frac{1}{2T} F_{ij}\theta^i\theta^j
    \right]
    \exp\left[
        \frac{1}{T}J_i\theta^i
        -
        \frac{1}{T}\widetilde{V}(\theta)
    \right]
    \\
    &=
    \sqrt{\frac{(2\pi T)^n}{\det F}}\,
    \left\langle
        \exp\left[
            \frac{1}{T}J_i\theta^i
            -
            \frac{1}{T}\widetilde{V}(\theta)
        \right]
    \right\rangle_{\theta \sim \mathcal{N}(0,\,T F^{-1})}.
\end{align}
Now, we may use a well-known identity for Gaussian measures (see
Appendix~\ref{app:gaussian_measure_identity} for a short motivation) to write
\begin{align} 
    &Z(T,J) \nonumber \\
    &=
    \left.
    \sqrt{\frac{(2\pi T)^n}{\det F}}\,
    \exp\left(
        \frac{T}{2} F^{ij}\partial_i\partial_j
    \right)
    \exp\left[
        \frac{1}{T}J_k\theta^k
        -
        \frac{1}{T}\widetilde{V}(\theta)
    \right]
    \right|_{\theta = 0} \label{eq:qft_expr} \\
    &=
    \sqrt{\frac{(2\pi T)^n}{\det F}}
    \left[
        1
        +
        \frac{1}{2T}F^{ij}J_iJ_j
        -
        \frac{1}{2T}F^{ij}
        \bigl(\partial_i\widetilde{V}\bigr)
        \bigl(\partial_j\widetilde{V}\bigr)
        -
        \frac{1}{2 T}F^{ij}
        \partial_i\partial_j\widetilde{V}
        - \frac{1}{2 T} F^{ij} J_i (\partial_j \widetilde{V})
        + \mathcal{O}(T^{-2})
    \right]_{\theta=0}
    \\
    &=
    \underbrace{
    \sqrt{\frac{(2\pi T)^n}{\det F}}\,
    \exp\left[
        \frac{1}{2T}F^{ij}J_iJ_j
    \right]}_{=:Z_\mathrm{G}}
    \underbrace{
    \left[
        1
        -
        \frac{1}{2T}F^{ij}
        \bigl(\partial_i\widetilde{V}\bigr)
        \bigl(\partial_j\widetilde{V}\bigr)
        -
        \frac{1}{2 T}F^{ij}
        \partial_i\partial_j\widetilde{V}
        - \frac{1}{2 T} F^{ij} J_i (\partial_j \widetilde{V})
        +
        \mathcal{O}(T^{-2})
    \right]_{\theta=0}}_{=:Z_\mathrm{NG}} \, .
\end{align}
With this, consider the work form
$\dd W = \Lambda \cdot \dd J$, with the quantity
\begin{equation}
\Lambda^i
= \partial^i F
= -T \partial^i \log Z 
= -T \partial^i \log Z_{\mathrm{G}} -T \partial^i \log Z_{\mathrm{NG}}
=: \Lambda^i_{\mathrm{G}} + \Lambda^i_{\mathrm{NG}}\, ,
\end{equation}
where we denote $\partial^i := \partial / \partial J_i$.
For the Gaussian part, we find
$\Lambda^i_{\mathrm{G}}(J) = -F^{ij}J_j$, to be \emph{independent} of the
temperature. This means that
\begin{equation}
    \Delta W_{12}
    =
    \int_{J_{12}(t)} \dd J \cdot \Lambda_{\mathrm{G}}(J)
    =
    -
    \Delta W_{34} \quad \rightarrow \quad \Delta W^\mathrm{G}_\mathrm{net} = 0\, ,
\end{equation}
for any choice of path $J_{12}(t)$. $\Delta W_{23} = \Delta W_{41} = 0$ since these
edges of the cycle are iso-$J$.  At the same time, as soon as a single term in
the series expansion of $Z_\mathrm{NG}$ becomes nonzero, the partition sum will
be non-Gaussian, $Z_\mathrm{NG} \neq 1$. Thus, $\Lambda_\mathrm{NG}$ will depend
on the temperature s.\,t.\@ the two functions $\Lambda_{\mathrm{NG}}(T_c,\cdot)$
and $\Lambda_{\mathrm{NG}}(T_h,\cdot)$ are distinct and one may thus find a path
$J_{12}(t)$ that fulfils
\begin{equation}
    \Delta W_{12}
    = \int_{J_{12}(t)} \dd J \cdot \Lambda(T_h,J)
    \neq \int_{J_{12}(t)} \dd J \cdot \Lambda(T_c,J)
    = -\Delta W_{34}
    \quad \rightarrow \quad
    \Delta W^\mathrm{NG}_\mathrm{net} \neq 0\, .
\end{equation}
\end{proof}
To reiterate in more detail, an MCMC cycle with a Gaussian model will always
produce $\Delta W_\mathrm{net} = 0$ independent of the choice of cycle
trajectory in $(T,J)$ space. For a non-Gaussian model, it is always possible to
find a cycle trajectory s.\,t.\@ the resulting cycle produces $\Delta
W_\mathrm{net} \neq 0$. This is precisely what we have seen in our examples
above. For the Gaussian in Section~\ref{sec:gauss} we found $\Delta
W_\mathrm{net} = 0$ for both cycle trajectories (that span the space of all
possible trajectories in that case). The non-Gaussian Rosenbrock likelihood in
Section~\ref{sec:rosenbrock}, on the other hand, exhibited $\Delta
W_\mathrm{net} = 0$ along the first chosen trajectory while we found $\Delta
W_\mathrm{net} \neq 0$ for the second trajectory.

The effect of anharmonicities in an otherwise quadratic potential is well
known in solid state physics, where it explains the thermal expansion of
solids~\citep{solid_state_theory}. A rudimentary realisation of a heat engine
might be a solid that is subjected to stresses and changes in temperature,
emulating a cycle in $J$ and $T$. If the atoms were bound in a mutual potential
of quadratic shape, they would show vibrations about the equilibrium position.
In an anharmonic potential, however, the centre of vibration would be shifted
away from the minimum of the potential.  This effectively increases the average
distance between the atoms, leading to thermal expansion. From a thermodynamic
point of view, the minimum in free energy determines the average distance
between the atoms. Then, thermal expansion can perform work against external
stresses, in contrast to the case of a quadratic potential, where thermal
expansion does not occur and mechanical work is not performed. This completes
the argument why a quadratic potential representing Gaussian distributions is
ineffective as a heat engine and leads to a zero net work output.

Effectively, the mechanical work performed in a Stirling-type MCMC cycle
reflects the deviation from Gaussianity of the posterior distribution, and the
net work output of the cyclic MCMC engine serves as a measure of non-Gaussianity that
is independent from the assumption of any particular functional shape. Please
note, however, that this is not computationally competitive with established
methods for such measurements. For instance, one might simply sample $10^3$
samples from the above Rosenbrock model \eqref{eq:rosenbrock_def} with RMH or
HMC and then compare the distribution of the samples to the empirical Gaussian
$\mathcal{N}(\hat{\mu}, \hat{\Sigma})$.  This would be much cheaper than our
method since we need $\sim 10^5$ steps of $10^2$ chains to reduce the
uncertainty of our measurements to a reasonable amount.
\subsection{Non-Gaussian cosmological inference problem}

\begin{figure}[htbp]
    \centering
    \includegraphics[width=.48\textwidth]{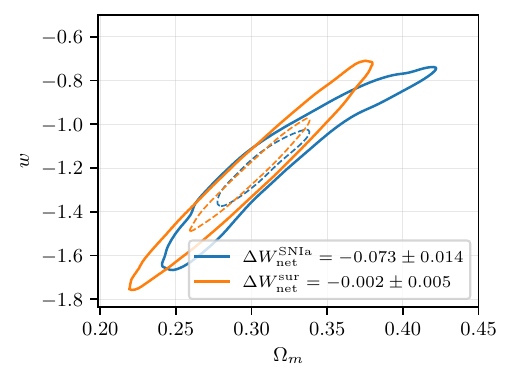}
    \caption{Kernel density estimation of samples from the SN Ia posterior (blue)
    and its Gaussian approximation (orange). The apparent non-Gaussianity is reflected
    in the fact that the MCMC cycles performed non-zero net work with the former
    but not with the latter.}
    \label{fig:sn1a}
\end{figure}

As a topical Bayesian inference problem we consider the determination of
cosmological parameters in supernova cosmology \citep{pantheonshoes}, where the
relationship between redshift $z$ and apparent magnitudes $m$ is probed.  The
Hubble function for a dark energy-dominated Universe is given by
\begin{equation}
    H^2(z) = H_0^2 \left( \Omega_m (1+z)^3 + (1 - \Omega_m) (1+z)^{3(1+w)} \right)\,.
\end{equation}
from which the luminosity distance is obtained numerically by solving
\begin{equation}
    \frac{\dd d_L}{\dd z} = \frac{d_L}{1+z} + \frac{c(1+z)}{H(z)}\,,
\end{equation}
from which one may then compute the prediction for the observed apparent magnitude,
\begin{equation}
    m_\mathrm{pred} = M + 5 \log_{10} d_L + 10\,,
\end{equation}
with the absolute magnitude $M$ as a nuisance parameter. The likelihood is
Gaussian in the data,
\begin{equation} \label{eq:sn1a_likelihood}
    \mathcal{L} \propto \exp\left(-\frac{1}{2} (m^i_{\text{obs}} - m^i_{\text{pred}}(\theta)) C_{ij} (m^j_{\text{obs}} - m^j_{\text{pred}}(\theta))\right)\,,
\end{equation}
with the inverse data covariance $C_{ij}$. To make
this application comparable to the two examples above, we profile the
likelihood, fixing the absolute magnitude $M = -19.25$ and the Hubble constant
$H_0 = 73.28\,\mathrm{km}/\mathrm{s}/\mathrm{Mpc}$, thus obtaining a two-dimensional inference problem with $\theta =
(\Omega_m, w)$.

Since the above likelihood is quite expensive to evaluate, we ran the experiment
with a Gaussian surrogate first to tune the hyperparameters of the cycle. For
this, we approximated the likelihood around the maximum likelihood estimate as
\begin{equation} \label{eq:sn1a_surrogate}
    \mathcal{L} \propto \exp\left(-\frac{1}{2} (\theta^i - \theta^i_\text{MLE}) F_{ij}(\theta_\text{MLE}) (\theta^j - \theta^j_\text{MLE})\right)\,,
\end{equation}
with the Fisher information $F$.~\footnote{This is done only to save time, one
could equally have used the full likelihood directly.} For the MCMC cycle, we
chose $J_{(1)} = (0.1, 1.0)$, $J_{(2)} = (0.15, 1.5)$, $T_c = 1.0$, $T_h =
10.0$. We ran the cycles both with the surrogate and the original model and
report the resulting net works in Figure~\ref{fig:sn1a} along with kernel
density estimates of samples from the posteriors. The value of 
$\Delta W_\mathrm{net} = -0.073 \pm 0.014$ clearly reflects the non-Gaussianity of the model.
Since we could not predict the heat exchange this time, we compute
$\Delta S_\mathrm{diss} = \oint \delta Q / T$ to make sure that the cycle
remained reversible. We find $\Delta S_\mathrm{diss} = 0.1 \pm 0.4$ to be
compatible with zero, confirming reversibility and the validity of the result.

\section{Conclusion} \label{sec:conclusion}
In a proof of concept, this paper demonstrates that cyclic processes for MCMC
algorithms may be constructed theoretically, realised in practice, and allow the
quantification of non-Gaussianities of the posterior distributions in Bayesian
inference. From a conceptual point of view, the MCMC cycles are based on the
realisation that Bayesian inference problems and MCMC algorithms may be
described as thermodynamic systems with the methods from statistical physics,
using the notions of partition functions, their associated thermodynamic
potentials, their state variables and finally net heat exchanged and mechanical
work performed.  Our main technical contributions are the adaptive ensemble
schedulers that facilitate a tunable version of the
Rosenbluth-Metropolis-Hastings and Hamiltonian Monte Carlo algorithms. We apply
the notion of quasi-staticity from thermodynamics to ensure that the sampling
algorithms are affected at timescales large enough to ensure they remain close
enough to thermal equilibrium throughout any change of state. With these
schedulers we implement MCMC cycles for a Bayesian canonical ensemble made up of
four phases alternately varying the temperature $T$ and the sources $J$ inspired
by the Stirling cycle from thermodynamics.  Such cycles can be defined for a
broad class of statistical models for which the relevant moments are finite and
as long as the chosen schedulers allow the sampler to remain sufficiently close
to thermodynamic equilibrium throughout the different phases.  They generalise
numerical methods such as simulated annealing by $(i)$ introducing cyclic
changes of state and by $(ii)$ altering additional parameters, in our case the
sources $J$ used in statistical physics for the determination of cumulants or
moments of the posterior distribution.

We first test them on an anisotropic Gaussian, confirming all predictions made
by a corresponding thermodynamic theory. Our most interesting fundamental
insight surely is that the MCMC cycles can produce non-zero net work output if
and only if the model under consideration is non-Gaussian. This was first found
by comparing the Gaussian example to that of a Rosenbrock likelihood; we then
verified our claim analytically for arbitrary non-Gaussian posteriors. To be
more precise, in case of a non-Gaussian posterior, it is always possible to find
a cycle trajectory along which the MCMC cycle will perform non-zero net work
while the net-work will be zero for any trajectory for a Gaussian posterior.
Thus, the work performed in a cycle by an MCMC cycle may serve as a
(computationally very expensive) measure of non-Gaussianity of the inferred
posterior distribution, independent of its particular functional form.  We
verify the applicability of our method to a practical example from supernova
cosmology, diagnosing a non-Gaussianity in the model. To put our findings into a
physical context, we give an example from solid state physics with an analogous
phenomenology.

The most obvious idea for further study is to apply these MCMC cycles to a wider
range of statistical models. In particular, it would be interesting to consider
higher-dimensional inference problems where the question of the cycle trajectory
would become more important. Here, one could even study cycle trajectories only
in the sources $J$ at constant temperature which should generally not perform 
any net work.  A more technical study would be to use different sampling
algorithms. As we have demonstrated, it is straightforward to use the adaptive
schedulers with any canonical methods (in the thermodynamic sense) such as
adaptive RMH \citep{adaptive_rmh}, NUTS \citep{nuts}, RMHMC \citep{rmhmc}, MALA
\citep{mala} etc. Algorithms that are based on different thermodynamic ensembles
such as the microcanonical MAMS \citep{mams} or the macrocanonical Avalanche
algorithm \citep{partition_macro} would require more care since the
thermodynamic theory would need to be adapted.  Probably, it is best to choose
the algorithm that would be best suited to sample from a given likelihood even
without running any cycles. Our central result will hold for any canonical
sampling algorithm.

In conclusion, the application of concepts from statistical physics to canonical
MCMC methods has again produced both fruitful technical and fundamental
insights, providing a new angle on their thermodynamic nature.

\section*{Acknowledgements}

\paragraph{Funding information}
This work was supported by the Deutsche Forschungsgemeinschaft (DFG, German
Research Foundation) under Germany's Excellence Strategy EXC 2181/1 - 390900948
(the Heidelberg STRUCTURES Excellence Cluster). We acknowledge the usage of the
AI-clusters {\em Tom} and {\em Jerry} funded by the Field of Focus 2 of
Heidelberg University. HvC is supported by the Konrad Zuse School of Excellence
in Learning and Intelligent Systems (ELIZA) through the DAAD programme Konrad
Zuse Schools of Excellence in Artificial Intelligence, sponsored by the Federal
Ministry of Education and Research. 

The authors would like to thank Rebecca Maria Kuntz for the help in typesetting,
Benedikt Schosser for help with plots, and both of them for insightful
discussions and helpful comments.

\bibliography{references}
\bibliographystyle{tmlr}

\appendix

\section{Einstein summation convention}
\label{app:einstein_summation}
In this paper, we use the Einstein summation convention. The general
idea is to work with components of vectors, covectors and tensors rather than
with the objects themselves. Instead of giving a fully formal definition, we
summarise the rules that are relevant for the calculations in this paper.

\begin{itemize}
\item Components of vectors are written with contravariant (upper) indices,
for example $\theta = (\theta^1,\ldots,\theta^n)$.
\item Components of covectors are written with covariant (lower) indices,
for example $J = (J_1,\ldots,J_n)$.
\item Components of bilinear forms are written with two covariant indices, for
example \(F_{ij}\). The components of the inverse bilinear form are written with
two contravariant indices, for example \(F^{ij}\).

\item Whenever an index appears exactly twice in a term, once in the upper and once
in the lower position, it is implicitly summed over. For example,
\begin{equation}
    J_i\theta^i
    =
    \sum_{i=0}^{n-1} J_i\theta^i .
\end{equation}
Such repeated indices are called dummy indices and may be renamed freely, e.g.
    $J_i\theta^i
    =
    J_k\theta^k$.
(In terms of computer science, their \enquote{scope} is limited to the term in
which they appear.) This is the actual Einstein summation convention.
\item Indices which appear only once are free indices. An expression with one free
index represents the components of a vector or covector. Expressions in which
the same index appears more than twice are not used in this convention and will
be avoided.

\item Taking derivatives changes the index position. In this paper we use the
notation
\begin{equation}
    \partial^i F
    :=
    \frac{\partial F}{\partial J_i},
    \qquad
    \partial_i \widetilde V
    :=
    \frac{\partial \widetilde V}{\partial \theta^i}.
\end{equation}
Thus differentiation with respect to a covariant component \(J_i\) produces an
upper index, while differentiation with respect to a contravariant component
\(\theta^i\) produces a lower index.

\item Taking the derivative of a vector component w.\,r.\,t.\@ another yields
a Kronecker delta, for example
\[
\partial_i \theta^j = \delta^j_i\,.
\]

\end{itemize}

Let us give a few examples of how standard matrix notation is translated into
index notation.
\begin{itemize}
\item The symmetry of a matrix $F^\top=F$ is written as $F_{ij}=F_{ji}$.
\item The inverse relation \(FF^{-1}=F^{-1}F=\mathds{1}\) becomes
$
    F_{ij}F^{jk}
    =
    F^{ij}F_{jk}
    =
    \delta^i_k
$.

\item Since we work with components, the ordering of factors in a term is
irrelevant.  For example,
\begin{equation}
    J^\top F^{-1}J
    =
    J_iF^{ij}J_j
    =
    F^{ij}J_iJ_j
    =
    J_jJ_iF^{ij}.
\end{equation}
\end{itemize}

As a more complicated example, consider how the first non-trivial term in the
operator
$
    \exp\left(
        T / 2 \, F^{ij}\partial_i\partial_j
    \right)
$
acts on the second-order term in the expansion of \(\exp(J_k\theta^k/T)\) in
\eqref{eq:qft_expr}. We obtain
\begin{align}
    \frac{T}{2}F^{ij}\partial_i\partial_j
    \left[
        \frac{1}{2T^2}(J_k\theta^k)^2
    \right] 
    &=
    \frac{1}{4T}F^{ij}\partial_i\partial_j
    \left(
        J_k\theta^k J_\ell\theta^\ell
    \right) \\
    &=
    \frac{1}{4T}F^{ij}\partial_i
    \left(
        J_k\delta_j^k J_\ell\theta^\ell
        +
        J_k\theta^k J_\ell\delta_j^\ell
    \right) \\
    &=
    \frac{1}{4T}F^{ij}\partial_i
    \left(
        J_jJ_\ell\theta^\ell
        +
        J_k\theta^kJ_j
    \right) \\
    &=
    \frac{1}{4T}F^{ij}
    \left(
        J_jJ_\ell\delta_i^\ell
        +
        J_k\delta_i^kJ_j
    \right) \\
    &=
    \frac{1}{4T}F^{ij}
    \left(
        J_jJ_i+J_iJ_j
    \right) \\
    &=
    \frac{1}{2T}F^{ij}J_iJ_j .
\end{align}
This notation is used in Special and General Relativity, thus its name.  For a
longer and more rigorous introduction, see for instance Sec.\@ 1.2.2 of the
lecture notes by \citet{matthias_theoastro}.

\section{Gaussian Measure Identity}
\label{app:gaussian_measure_identity}

Here we briefly motivate the identity 
\begin{equation} \label{eq:qft_id}
    \langle f(\theta)\rangle
    =
    \left.
    \exp\left(
        \frac{1}{2}C^{ij}\partial_i\partial_j
    \right)
    f(\theta)
    \right|_{\theta=0}.
\end{equation}
used in Eq.~\eqref{eq:qft_expr}. Let \(\theta\sim\mathcal{N}(0,C)\) and expand
$f$ around the origin,
\begin{align}
    f(\theta)
    =
    f(0)
    +
    \partial_i f(0)\theta^i
    +
    \frac{1}{2}\partial_i\partial_j f(0)\theta^i\theta^j
    +
    \frac{1}{3!}\partial_i\partial_j\partial_k f(0)\theta^i\theta^j\theta^k
    +
    \frac{1}{4!}\partial_i\partial_j\partial_k\partial_\ell f(0)
    \theta^i\theta^j\theta^k\theta^\ell
    +\mathcal{O}(\theta^5) .
\end{align}
Since the Gaussian is centered, all odd moments vanish. The first non-vanishing
moments are
\begin{equation}
    \langle \theta^i\theta^j\rangle = C^{ij},
    \qquad
    \langle \theta^i\theta^j\theta^k\theta^\ell\rangle
    =
    C^{ij}C^{k\ell}
    +
    C^{ik}C^{j\ell}
    +
    C^{i\ell}C^{jk}.
\end{equation}
Thus the left-hand side of \eqref{eq:qft_id} becomes
\begin{align}
    \langle f(\theta)\rangle
    =
    f(0)
    +
    \frac{1}{2}C^{ij}\partial_i\partial_j f(0)
    +
    \frac{1}{4!}
    \left(
        C^{ij}C^{k\ell}
        +
        C^{ik}C^{j\ell}
        +
        C^{i\ell}C^{jk}
    \right)
    \partial_i\partial_j\partial_k\partial_\ell f(0)
    +\mathcal{O}(C^3) .
    \label{eq:gaussian_moment_expansion}
\end{align}
On the other hand, expanding the differential operator gives
\begin{align}
    \left.
    \exp\left(
        \frac{1}{2}C^{ij}\partial_i\partial_j
    \right)
    f(\theta)
    \right|_{\theta=0}
    &=
    \left.
    \left[
        1
        +
        \frac{1}{2}C^{ij}\partial_i\partial_j
        +
        \frac{1}{2!}
        \left(
            \frac{1}{2}C^{ij}\partial_i\partial_j
        \right)^2
        +\mathcal{O}(C^3)
    \right]
    f(\theta)
    \right|_{\theta=0}
    \nonumber \\
    &=
    f(0)
    +
    \frac{1}{2}C^{ij}\partial_i\partial_j f(0)
    +
    \frac{1}{8}C^{ij}C^{k\ell}
    \partial_i\partial_j\partial_k\partial_\ell f(0)
    +\mathcal{O}(C^3) .
    \label{eq:gaussian_operator_expansion}
\end{align}
which is equal to the left-hand side by comparison of coefficients. Identities
of this kind are heavily used for example in quantum field theory and discussed
in more detail by, e.\,g.\@, \citet{zinnjustin}. Incidentally, one can also use this
identity to compute moments of a distribution by taking derivatives of a normalising
flow that approximates it \citep{partition_flow}.

\section{Toy-Model Thermodynamics}
\label{app:toy_model_thermodynamics}

Here we collect the short derivations of the thermodynamic theories used in
Sections~\ref{sec:gauss} and~\ref{sec:rosenbrock}. Throughout, $\pi \equiv 1$
and constants in the likelihood that do not depend on $T$ or $J$ are omitted.

\paragraph{Gaussian likelihood.}
Starting with Eq.~\eqref{eq:gaussian_likelihood}, the canonical partition sum is
computed by completing the square.
\begin{align}
Z(T,J)
&=
\int \dd^n\theta\,
\exp\left[
-\frac{1}{2T}F_{ij}\theta^i\theta^j
+\frac{1}{T}J_i\theta^i
\right] \nonumber \\
&=
\exp\left(\frac{1}{2T}F^{ij}J_iJ_j\right)
\int \dd^n\theta\,
\exp\left[
-\frac{1}{2T}F_{ij}
\left(\theta^i-F^{ik}J_k\right)
\left(\theta^j-F^{jl}J_l\right)
\right] \nonumber \\
&=
\sqrt{\frac{(2\pi T)^n}{\det F}}\,
\exp\left(\frac{1}{2T}F^{ij}J_iJ_j\right).
\end{align}
Thus
\begin{align}
F(T,J)
&=
-T\log Z
=
-\frac{Tn}{2}\log(2\pi T)
+\frac{T}{2}\log\det F
-\frac{1}{2}F^{ij}J_iJ_j, \nonumber \\
S(T,J)
&=
-\pdv{F}{T}
=
\frac{n}{2}\log(2\pi T)
+\frac{n}{2}
-\frac{1}{2}\log\det F .
\end{align}
The entropy is independent of $J$, hence
\begin{equation}
C=T\pdv{S}{T}=\frac{n}{2},
\qquad
h^i=T\pdv{S}{J_i}=0,
\qquad
\dot Q=C\dot T+h^i\dot J_i=\frac{n}{2}\dot T .
\end{equation}

\paragraph{Rosenbrock likelihood.}
For Eq.~\eqref{eq:rosenbrock_def}, set $x=\theta^0$ and
$u=\theta^1-c^2x^2$. Then $\dd\theta^0\dd\theta^1=\dd x\dd u$ and
again we complete the squares.
\begin{align}
Z(T,J)
&=
\int \dd x\,\dd u\,
\exp\left[
-\frac{1}{2T}\left(a^2x^2+b^2u^2\right)
+\frac{1}{T}\left(J_0x+J_1u+c^2J_1x^2\right)
\right] \nonumber \\
&=
\int \dd x\,
\exp\left[
-\frac{a^2-2c^2J_1}{2T}x^2
+\frac{J_0}{T}x
\right]
\int \dd u\,
\exp\left[
-\frac{b^2}{2T}u^2
+\frac{J_1}{T}u
\right] \nonumber \\
&=
\exp\left[
\frac{1}{2T}
\frac{J_0^2}{a^2-2c^2J_1}
\right]
\int \dd x\,
\exp\left[
-\frac{a^2-2c^2J_1}{2T}
\left(
x-\frac{J_0}{a^2-2c^2J_1}
\right)^2
\right] \nonumber \\
&\quad \times
\exp\left[
\frac{1}{2T}
\frac{J_1^2}{b^2}
\right]
\int \dd u\,
\exp\left[
-\frac{b^2}{2T}
\left(
u-\frac{J_1}{b^2}
\right)^2
\right] \nonumber \\
&=
\frac{2\pi T}{b\sqrt{a^2-2c^2J_1}}
\exp\left[
\frac{1}{2T}
\left(
\frac{J_1^2}{b^2}
+
\frac{J_0^2}{a^2-2c^2J_1}
\right)
\right],
\end{align}
valid for $a^2-2c^2J_1>0$. The free energy is
\begin{equation}
F(T,J)
=
-T\log(2\pi T)
+T\log b
+\frac{T}{2}\log(a^2-2c^2J_1)
-\frac{1}{2}
\left(
\frac{J_1^2}{b^2}
+
\frac{J_0^2}{a^2-2c^2J_1}
\right),
\end{equation}
and therefore
\begin{equation}
S(T,J)
=
-\pdv{F}{T}
=
1+\log(2\pi T)-\log b
-\frac{1}{2}\log(a^2-2c^2J_1).
\end{equation}
Taking the derivatives that enter Eq.~\eqref{eq:heat_rate} gives
\begin{equation}
C=T\pdv{S}{T}=1,
\qquad
h^0=T\pdv{S}{J_0}=0,
\qquad
h^1=T\pdv{S}{J_1}
=
\frac{Tc^2}{a^2-2c^2J_1},
\end{equation}
so that the heat rate is
\begin{equation}
\dot Q
=
\dot T
+
\frac{Tc^2}{a^2-2c^2J_1}\dot J_1 .
\end{equation}

\section{Experimental Hyperparameters}
\label{app:hyperparameters}

Table~\ref{tab:hyperparameters} summarises the hyperparameters used for the
numerical experiments presented in this paper.
Thinning refers to subsampling the MCMC chain during sampling to reduce
autocorrelation, while skip steps refers to downsampling the trajectory for
plotting and analysis to reduce visual clutter and file size.
The Rosenbrock $c$ sweep (Figure \ref{fig:rosenbrock_net_work_vs_c}) uses the
hyperparameters in the nonzero-work column, except that $c$ is varied.

\begin{table}[htbp]
\centering
\caption{Hyperparameters for the numerical experiments.}
\label{tab:hyperparameters}
\resizebox{\textwidth}{!}{\begin{tabular}{@{}lcccccc@{}}
\toprule
& Gaussian (across) & Gaussian (along) & Rosenbrock (nonzero) & Rosenbrock (zero) & SN Ia (surrogate) & SN Ia (full) \\
\midrule
Likelihood & \eqref{eq:gaussian_likelihood} & \eqref{eq:gaussian_likelihood} & \eqref{eq:rosenbrock_def} & \eqref{eq:rosenbrock_def} & \eqref{eq:sn1a_surrogate} & \eqref{eq:sn1a_likelihood} \\
Params & $a=1$, $b=10$ & $a=1$, $b=10$ & $a=2$, $b=10$, $c=0.5$ & $a=2$, $b=10$, $c=0.5$ & $M=-19.25, H_0=73.28$ & $M=-19.25, H_0=73.28$ \\
$T_c$ & 1.0 & 1.0 & 1.0 & 1.0 & 1.0 & 1.0 \\
$T_h$ & 2.0 & 2.0 & 3.0 & 3.0 & 10.0 & 10.0 \\
$J_{(1)}$ & $(5.5, 0)$ & $(0, 5.5)$ & $(0, -4)$ & $(6, 0)$ & $(0.1, 1.0)$ & $(0.1, 1.0)$ \\
$J_{(2)}$ & $(5, 0)$ & $(0, 5)$ & $(0, -6)$ & $(5, 0)$ & $(0.15, 1.5)$ & $(0.15, 1.5)$ \\
$\epsilon$ & $10^{-3}$ & $10^{-3}$ & $10^{-3}$ & $5\times10^{-4}$ & $4\times10^{-4}$ & $4\times10^{-4}$ \\
Sampler & RMH & RMH & HMC & HMC & HMC & HMC \\
Int.\ steps & --- & --- & 10 & 10 & 10 & 10 \\
Chains & $10^2$ & $10^2$ & $10^2$ & $10^2$ & $10^2$ & $10^2$ \\
Steps & $10^5$ & $10^5$ & $3\times10^5$ & $3\times10^5$ & $10^5$ & $10^5$ \\
Burn-in & $10^3$ & $10^3$ & $10^4$ & $10^4$ & $10^4$ & $10^4$ \\
Thin & 10 & 10 & $10^2$ & $10^2$ & $10^2$ & $10^2$ \\
Step size & 0.5 & 0.5 & 0.1 & 0.1 & 0.05 & 0.05 \\
Skip & $10^2$ & $10^2$ & $10^2$ & $10^2$ & $10^2$ & $10^2$ \\
Figures & \ref{fig:anisotropic_gaussian_energy}, \ref{fig:anisotropic_gaussian_heat_work}, \ref{fig:gaussian_q_net_vs_n_cycles}  & \ref{fig:anisotropic_gaussian_J_along} & \ref{fig:rosenbrock_summary}, \ref{fig:rosenbrock_net_work_vs_c} & \ref{fig:rosenbrock_summary} & \ref{fig:sn1a} & \ref{fig:sn1a} \\
\bottomrule
\end{tabular}}
\end{table}
\end{document}